\newtheorem{proposition}{Proposition}
\newtheorem{remark}{Remark}
\begin{document}

\title{Information-Energy Capacity Region for SWIPT Systems with Power Amplifier Nonlinearity}

\author{\authorblockN{Ioannis Krikidis}
\authorblockA{Department of Electrical and Computer Engineering, \\ University of Cyprus, Cyprus}
\authorblockA{E-mail:  krikidis@ucy.ac.cy}}
\maketitle

\begin{abstract}
We study the information-energy capacity region (IE-CR) of an additive white Gaussian noise (AWGN) channel in the presence of  high-power amplifier (HPA) nonlinearity. Specifically, we consider a three-node network consisting of one transmitter, one information receiver and one energy receiver and we study the capacity-achieving input distribution under i) average-power, peak-power constraints at the transmitter, b) HPA nonlinearity at the transmitter, and c) nonlinearity of the energy harvesting circuit at the energy receiver. We prove that the input distribution is discrete and finite and we derive closed form expressions for the special cases of maximizing the harvested energy and maximizing the information capacity. We show that HPA significantly reduces the achievable IE-CR; to compensate HPA nonlinearity, a predistortion technique is also discussed and evaluated in terms of IE-CR performance.
\end{abstract}
\begin{keywords}
 SWIPT, wireless power transfer, high-power amplifier, input distribution, information-energy capacity region.
\end{keywords}

\section{Introduction}

Simultaneous wireless information and power transfer (SWIPT) is a new technology, where a dedicated radio-frequency (RF) transmitter conveys information and energy to wireless devices by using the same radio waveform \cite{CLE}.  It is a promising technology for future communication networks, which are characterized by a massive number of low-power devices (e.g., Internet of Things). The key idea of SWIPT has been proposed by Varshney in \cite{VAR}, where the fundamental trade-off between information and energy transfer has been introduced for a simple point-to-point channel; this work has been extended in \cite{GRO} for a parallel-links point-to-point channel. More recent works study the integration of SWIPT in more complex network configurations e.g., multiple-antenna systems \cite{RUI2}, multiple-access networks \cite{KRI}, multiple-antenna cellular networks \cite{LAM}, etc.     

One of the main particularities of a SWIPT network is that the wireless power transfer channel is highly nonlinear (in contract to the linear information transfer channel). Recent studies take into account the nonlinearity of the rectification circuit, and study the impact of waveform design and/or input distribution on the achieved information-energy capacity region (IE-CR). For instance, the work in \cite{BRU} models the rectifier's behaviour and introduces a mathematical framework to design waveforms that exploit nonlinearity. On the other hand, the authors in \cite{VAR2} study the input distribution that maximizes IE-CR for an additive white Gaussian noise (AWGN) channel under statistical constraints (first/second moment statistics) on the input distribution. By relaxing these constraints, the authors in \cite{MOR} study the input distribution under general average-power (AP) and a peak-power (PP) constraints at the transmitter. By using the mathematical framework in \cite{SMI}, they prove that the input distribution is unique, discrete with a finite number of mass points.     

According to experimental studies, signals with high peak-to-average-power-ratio (PAPR) e.g., multi-sine signals, increase the direct-current (DC) output power of the rectifier and enhance the IE-CR performance \cite{CLE,BRU,KIM}, in comparison to constant-envelope signals. However, signals with high PAPR are more sensitive to high-power amplifier (HPA) nonlinearities that can significantly degrade the quality of the communication \cite{QI}. Despite this fundamental experimentally-validated observation, previous works do not take into account the effects of HPA on the achieved IE-CR and assume that the RF power amplifier operates always in the linear regime.  

In this paper, we study the fundamental limits of a SWIPT system which is characterized by HPA nonlinearities at the transmitter. By taking into account a memoryless HPA model i.e., solid state power amplifier (SSPA) model \cite{QI}, we characterize the IE-CR for a three-node real-valued AWGN channel, under both AP and PP constraints at the transmitter as well as rectification nonlinearities at the energy receiver. We study the input distribution that maximizes IE-CR by formulating appropriate convex optimization problems over the input distribution and we provide simplified closed-form expressions, when the design maximizes information/energy transfer. We show that HPA significantly reduces the IE-CR, while a non-tradeoff between information and energy is observed for low PP constraints. Finally, a deterministic digital predistortion (PD) that inverses the HPA nonlinearities and linearizes the below-saturation regime is discussed; we show that PD enhances the IE-CR performance when HPA nonlinearity is more severe.   \\  
\noindent {\it Notation:}  Lower case bold symbols denote vectors, $\mathbb{E}[\cdot]$ represents the expectation operator, $\succeq$ denotes componentwise inequality, the superscript $(\cdot)^{\top}$ denotes 
transpose, and $\mathbb{P}(X)$ is the probability of the event $X$; $f(x)\nearrow (x_1, x_2)$ and $f(x)\searrow (x_1, x_2)$ denote that the function $f(x)$ is monotonically increasing/decreasing in the interval $(x_1, x_2)$, respectively.

\begin{figure}
\centering\
\includegraphics[width=0.8\linewidth]{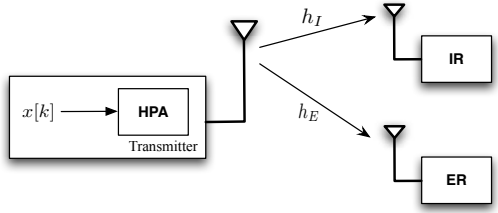}
\vspace{-0.5cm}
\caption{System model consisting of one transmitter, an information receiver and an energy receiver.} \label{model}
\end{figure}

\section{System model}

We assume a simple SWIPT topology consisting on one transmitter, one information receiver (IR), and one energy receiver (ER) \cite{MOR}. All the terminals are equipped with a single antenna; the IR converts the received signal to the baseband to decode the transmit information, while the ER harvests energy from the received RF signal. The transmitter transmits a pulse-amplitude modulated signal $x(t)=\sum_{k=-\infty}^{\infty}x[k]p(t-kT)$ with an average power $\sigma_x^2$, where $p(t)$ is the rectangular pulse shaping filter (i.e., $p(t)=1$ for $0<t\leq T$), $T$ is the symbol interval, and $x[k]$ is the information symbol at time index $k$, modeled as the realization of an independent and identically distributed (i.i.d) real random variable $X$ with cumulative distribution function $F_X(x)$. We assume a normalized symbol interval $T=1$ and thus the measures of energy and power become identical and therefore are used equivalently. Fig. \ref{model} schematically presents the system model.  

The modulated signal is amplified by the HPA at the RF chain, which causes amplitude distortion and nonlinearity on the transmitted amplitude-modulated signal $x(t)$. Specifically, the output of the nonlinear HPA can be written as $\hat{x}[k]=d(x[k])$ (i.e., random variable $\hat{X}=d(X)$), where $d(\cdot)$ denote the AM-to-AM conversion which is given by the considered SSPA HPA model \cite{QI} i.e., 
\begin{align}
&d(r)=
\frac{r}{\left[1+\left(\frac{r}{A_{s}} \right)^{2\beta} \right]^{\frac{1}{2\beta}}}, 
\end{align}
where $A_{s}$ is the output saturation voltage, and $\beta$  represents the smoothness of the transition from the linear regime to the saturation. Let $A_0$ denote the minimum input voltage that drives the HPA output to the saturation i.e., $A_0=\min_{d(r)=A_s} r$ with $A_0=A_{s}$ for $\beta\gg 1$. Fig. \ref{HPA} presents the input-output voltage characteristics for the considered HPA model; as $\beta$ increases, the nonlinear transition regime (below saturation) of the HPA is linearized.        

We consider transmission over an AWGN channel with fixed and known channel fading \cite{VAR2,MOR}. The equivalent baseband received signal at the IR is given by 
\begin{align}
y(t)=h_{I}\hat{x}(t)+n(t),
\end{align} 
where $h_{I}\in\Re$ is the channel fading gain (constant) and $n(t)$ is the real-valued Gaussian noise with unit variance. We define the conditional probability 
\begin{align}
p(y|x)=\frac{1}{\sqrt{2\pi}}\exp \left[-\frac{(y-h_I d(x))^2}{2} \right].
\end{align}

The ER converts the received RF signal to DC power through a nonlinear rectification circuit\footnote{We assume that energy harvesting from background Gaussian noise is negligible and is ignored \cite{RUI2}.}. If $h_{E} \in \Re$ is the fading channel gain (constant) for the link transmitter-ER, the average harvested energy is captured by the following quantity (monotonically increasing with the average harvested energy) \cite{MOR,RUI}
\begin{align}
\mathcal{E}=&\mathbb{E}\big[I_0(B h_E |\hat{X}|) \big],
\end{align}
where $I_0(\cdot)$ denote the modified Bessel function of the first kind and order zero, and $B$ is a constant that depends on the characteristics of the rectification circuit\footnote{The considered energy harvested-based quantity results in $\mathcal{E}\geq 1$; $\mathcal{E}=1$ corresponds to a zero DC power delivered to the load \cite[Eq. (4)]{MOR}.}. 

In addition to the transmit AP constraint $\sigma_x^2$, the transmitter has a PP constraint to control the negative effects of saturation which characterizes both the transmitter and the ER due to HPA and the diode breakdown, respectively \cite{MOR}; the PP constraint can be expressed as $|X|\leq A$, where $A$ is the peak amplitude. 
     
\begin{figure}
\centering\
\includegraphics[width=0.85\linewidth]{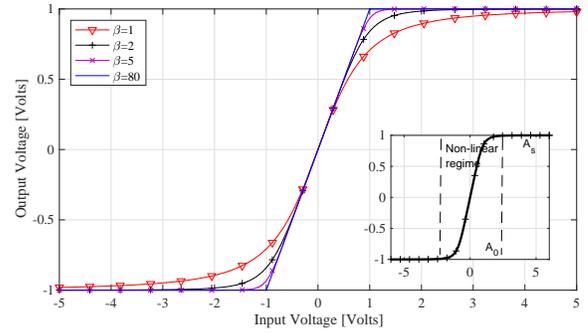}
\vspace{-0.3cm}
\caption{Input-output voltage characteristics for the SSPA power amplifier model; $A_{s}=1$, $\beta=\{1,2,5, 80\}$.} \label{HPA}
\end{figure}

\section{Information-energy capacity region}

We consider firstly the case where the IR is not present/active. In this case, we design the input distribution $F_X$ under both AP and PP constraints to maximize the power delivered at ER. The considered design problem can be formulated as
\begin{equation}
\begin{array}{rrclcl}
(P1)\;\;\displaystyle \max_{F_X} & \multicolumn{3}{l}{\mathbb{E}\big[I_0(B h_E |\hat{X}|)\big]}\\
\textrm{s.t.} & \mathbb{E}[X^2]\leq \sigma_x^2\\
&|X|\leq A.
\end{array}
\end{equation}
Although the problem (P1) is a linear optimization problem and can be solved with standard convex optimization tools (e.g., CVX), we can provide a closed form solution. The following proposition gives the solution to (P1) and the associated input distribution for the different cases.  
\begin{proposition}\label{pr1}
The maximum average harvested energy and the associated mass point distribution are given by
\begin{align}
\mathcal{E}_{\max}=p  I_0\big(B h_E d(\lambda) \big)+(1-p),\;\text{where}
\end{align}
\begin{itemize}
\item If $A^2\leq \sigma_x^2$, we have $p=1$, $\lambda=A$, and mass point distribution $\Pi_A=\frac{1}{2}(\delta_{-A}+\delta_{A})$. 
\item If  $\sigma_x^2\leq A^2$ and $g(x)\searrow (\sigma_x, A)$, we have $p=1$, $\lambda=\sigma_x$, and mass point distribution $\Pi_A=\frac{1}{2}(\delta_{-\sigma_x}+\delta_{\sigma_x})$. 
\item If $\sigma_x^2\leq A^2$ and $g(x)\nearrow (\sigma_x, A)$, we have $p=\frac{\sigma_x^2}{A^2}$, $\lambda=A$, and mass point distribution $\Pi_A=\frac{\sigma_x^2}{2A^2}(\delta_{-A}+\delta_{A})+\left( 1-\frac{\sigma_x^2}{A^2} \right)\delta_{0}$.
\item If $\sigma_x^2\leq A^2$ and the function $g(x)\nearrow (\sigma_x, A')$ and $g(x)\searrow (A',  A)$, we have $p=\frac{\sigma_x^2}{A'^2}$, $\lambda=A'$, and mass point distribution  $\Pi_A=\frac{\sigma_x^2}{2A'^2}(\delta_{-A'}+\delta_{A'})+ \left( 1-\frac{\sigma_x^2}{A'^2} \right)\delta_{0}$, with $A'\approx A_s$ for $\beta\gg1$,
\end{itemize}
where $\delta_x$ is the Dirac measure (point mass) concentrated at $x$, and $g(x)=\frac{1}{x^2} [I_0(B h_E d(x))-1]$.
\end{proposition}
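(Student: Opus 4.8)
The plan is to turn (P1) into a one‑dimensional moment problem and then read off its extreme points. Since $d(\cdot)$ is odd we have $|\hat X|=d(|X|)$, so the objective $\mathbb{E}[I_0(Bh_E|\hat X|)]=\mathbb{E}[I_0(Bh_E d(|X|))]$ and the constraints $\mathbb{E}[X^2]\le\sigma_x^2$, $|X|\le A$ all depend on $X$ only through $Z:=X^2\in[0,A^2]$. Writing $\psi(z):=I_0\!\big(Bh_E\,d(\sqrt z)\big)$, so that $\psi(0)=I_0(0)=1$ and $\psi$ is strictly increasing (a composition of the increasing maps $\sqrt{\cdot}$, $d(\cdot)$ and $I_0(\cdot)$), problem (P1) is equivalent to maximizing $\mathbb{E}[\psi(Z)]$ over distributions of $Z$ on $[0,A^2]$ with $\mathbb{E}[Z]\le\sigma_x^2$. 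Note also that $g(x)=\frac{\psi(x^2)-\psi(0)}{x^2}$ is exactly the slope of the chord of $\psi$ joining the origin to $\big(x^2,\psi(x^2)\big)$; the stated monotonicity of $g$ on $(\sigma_x,A)$ is the monotonicity of this chord‑slope in $z$ on $(\sigma_x^2,A^2)$.

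Next I split into the two regimes. If $A^2\le\sigma_x^2$ the power constraint is vacuous and, $\psi$ being increasing, the unique maximizer is the point mass at $z=A^2$, which pulled back to $X$ is $\Pi_A=\tfrac12(\delta_{-A}+\delta_A)$; this yields the first bullet with $p=1$, $\lambda=A$. If $\sigma_x^2<A^2$, the feasible set of laws of $Z$ is convex and weak‑$*$ compact and the functionals are affine, so an optimum is attained at an extreme point; a standard extreme‑point argument for a single active moment constraint (as in \cite{SMI,MOR}) shows such a point is supported on at most two atoms of $[0,A^2]$, and since $\psi$ is strictly increasing the constraint $\mathbb{E}[Z]=\sigma_x^2$ is active at the optimum. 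Consequently the optimal value equals $\widehat\psi(\sigma_x^2)$, where $\widehat\psi$ is the concave envelope of $\psi$ on $[0,A^2]$, and the optimal two‑point law is supported on the contact points of $\widehat\psi$ with $\psi$ that bracket $\sigma_x^2$.

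It remains to identify $\widehat\psi$ from the shape of $\psi$. Near $z=0$, $d(\sqrt z)\approx\sqrt z$ and $I_0(u)=1+u^2/4+u^4/64+\cdots$, so $\psi(z)=1+\tfrac{(Bh_E)^2}{4}z+O(z^2)$ is convex at the origin, while $d(\cdot)$ saturates at $A_s$ so $\psi$ flattens and is eventually concave; hence the chord‑slope $h(z):=(\psi(z)-1)/z$ (with $g(x)=h(x^2)$) is increasing near $0$ and unimodal, and $\widehat\psi$ consists of a single chord from $(0,1)$ to $\big(\zeta,\psi(\zeta)\big)$ followed by the graph of $\psi$ on $[\zeta,A^2]$, where $\zeta\in(0,A^2]$ maximizes $h$. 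The hypotheses on $g$ locate $\zeta$: if $g\searrow(\sigma_x,A)$ then $h$ decreases on $(\sigma_x^2,A^2)$, so $\zeta\le\sigma_x^2$, $\widehat\psi(\sigma_x^2)=\psi(\sigma_x^2)$, and the optimum is the point mass at $\sigma_x^2$, i.e. $\Pi_A=\tfrac12(\delta_{-\sigma_x}+\delta_{\sigma_x})$ with $p=1$, $\lambda=\sigma_x$ (second bullet); if $g\nearrow(\sigma_x,A)$ then $\zeta=A^2$ and the optimum puts mass $p=\sigma_x^2/A^2$ at $z=A^2$ and the rest at $z=0$ (third bullet); if $g\nearrow(\sigma_x,A')$ and $g\searrow(A',A)$ then $\zeta=A'^2$ and the optimum puts $p=\sigma_x^2/A'^2$ at $z=A'^2$ and the rest at $0$ (fourth bullet), and for $\beta\gg1$ one has $\psi(z)=I_0(Bh_E\sqrt z)$ (convex) for $z<A_s^2$ and $\psi(z)=I_0(Bh_EA_s)$ for $z\ge A_s^2$, forcing $\zeta=A_s^2$, i.e. $A'\to A_s$. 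In each case pulling $Z$ back to $X$ — a mass at $z_1>0$ becomes $\tfrac12(\delta_{-\sqrt{z_1}}+\delta_{\sqrt{z_1}})$, a mass at $0$ becomes $\delta_0$ — gives the stated $\Pi_A$, and evaluating the objective gives $\mathcal{E}_{\max}=p\,\psi(\lambda^2)+(1-p)=p\,I_0\!\big(Bh_E d(\lambda)\big)+(1-p)$.

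The main obstacle is this last step: justifying that the optimal two‑point law always has an atom at the origin, equivalently that $\widehat\psi$ has the claimed "one chord, then the graph" form. This requires controlling the global convexity profile of $z\mapsto I_0\!\big(Bh_E d(\sqrt z)\big)$ — essentially that it has a single inflection, or at least that its chord‑slope $h$ from the origin is unimodal — which follows from the explicit expressions for $I_0(\cdot)$ and the SSPA curve $d(\cdot)$ but needs a short monotonicity/convexity computation; once that shape is established the case analysis above is bookkeeping, and the given conditions on $g$ over $(\sigma_x,A)$ precisely select which of the four situations occurs.
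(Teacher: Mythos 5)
Your argument is, at its computational core, the same as the paper's: the substitution $Z=X^2$ and the chord slope $h(z)=(\psi(z)-1)/z$ are exactly the paper's function $g$ (indeed $g(x)=h(x^2)$), and your comparison of the point mass at $z=\sigma_x^2$ against the mixture of $0$ and $y^2$ with mean $\sigma_x^2$ is precisely the paper's inequality \eqref{con1}; the four bullets are then resolved identically by the monotonicity of $g$ on $(\sigma_x,A)$. What you add is the packaging: the reduction to a one-dimensional moment problem on $[0,A^2]$, the extreme-point/compactness justification that an optimizer supported on at most two atoms of $Z$ exists with the moment constraint active, and the identification of the optimal value with the concave envelope $\widehat\psi(\sigma_x^2)$. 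This is cleaner and more principled than the paper's ad hoc perturbation, which silently restricts the comparison class to laws supported on $\{0,\pm y\}$ or $\{\pm\sigma_x\}$.

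The obstacle you flag at the end is real, and it is the one substantive gap --- but it is shared by the paper's own proof, so you have not lost anything relative to it. The proposition's hypotheses constrain only the chord slope from the origin on $(\sigma_x,A)$; they do not by themselves exclude a two-atom law at $0<z_1<\sigma_x^2<z_2$ beating the point mass at $\sigma_x^2$ in the second bullet. Concretely, $h$ decreasing on $(\sigma_x^2,A^2)$ does not imply $\psi$ is concave there: $\psi(z)=1+z+ze^{-z}$ is increasing with $h(z)=1+e^{-z}$ decreasing everywhere, yet $\psi''(z)=(z-2)e^{-z}>0$ for $z>2$, so a Jensen-type mixture straddling $\sigma_x^2$ would win. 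Hence the claim that $\widehat\psi$ is ``one chord through the origin, then the graph'' genuinely requires the convexity profile of $z\mapsto I_0\big(Bh_E d(\sqrt z)\big)$ that you defer; supplying that computation (or restricting the comparison class as the paper implicitly does) is what separates a complete proof from the sketch. Two minor points: your $\zeta$ should maximize $h$ over $[\sigma_x^2,A^2]$ rather than over all of $(0,A^2]$, since an atom below $\sigma_x^2$ cannot be the upper endpoint of a mean-$\sigma_x^2$ chord through the origin; and in the first bullet the weights $1/2$ are not forced by (P1) (any split between $\pm A$ attains $\mathcal{E}_{\max}$) --- the paper selects $1/2$ to simultaneously maximize the information rate.
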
 
\begin{proof}
The proof is given in the Appendix.
\end{proof}
In case that the IR is active, the achieved information capacity with $\Pi_A$ is equal to $I_{\min}=\int_{-\infty}^{\infty}\sum_{j}p(y|x_j)p_j \log_2 \frac{p(y|x_j)}{\sum_{j} p(y|x_j)p_j} dy$ where $p_j=\mathbb{P}[X=x_j]$; since $\Pi_A=\sum_{j}p_j \delta_{x_j}$ is a binary/ternary distribution, the complexity of the numerical computation is low.  

Then, we consider the case where the target of the system is to maximize the Shannon information capacity under both AP and PP constraints. The problem can be formulated as follows
\begin{equation}
\begin{array}{rrclcl}
(P2)\;\;\displaystyle \max_{F_X} & \multicolumn{3}{l}{I(X;Y)}\\
\textrm{s.t.} & \mathbb{E}[X^2]\leq \sigma_x^2\\
&|X|\leq A,
\end{array}
\end{equation}
where $I(X;Y)$ is the average mutual information between the channel input $X$ and the channel output $Y=h_I \hat{X}+N$ with $\hat{X}=d(X)$. Given that the input probability distribution is constrained to $(-A, A)$, the mutual information is given by 
\begin{align}
I(X;Y)=\int_{-A}^{A} \int_{-\infty}^{\infty} p(y|x)\log_2 \left(\frac{p(y|x)}{p(y;F_X)} \right)dy dF_X,
\end{align}
where $p(y;F_X)$ is the marginal output probability density function given an input distribution $F_X$. Due to the considered PP constraint and the nonlinearity of the HPA model, we can show that the optimal input probability function $F_X$ is unique, finite and discrete. The proof requires the application of a systematic methodology and is similar to the analysis in \cite{MOR,SMI}. Given the finiteness/discretness of the input distribution, (P2) can be discretized and reformulated by the following convex optimization problem (corresponding to the capacity of a discrete memoryless channel (DMC) with AP/PP constraints) i.e.,
\begin{equation}
\begin{array}{rrclcl}
(P3)\;\;\displaystyle \max_{\pmb{p}} & \multicolumn{3}{l}{I\triangleq\sum_{i=1}^m \sum_{j=1}^n p_{ij} p_{j}\log_2 \frac{p_{ij}}{\sum_{k=1}^n p_{ik}p_{k}}}\\
\textrm{s.t.} & \mathbb{E}[X^2]\leq \sigma_x^2\\
&|X|\leq A \\
&\pmb{p}\succeq 0,\;\;\; \pmb{1}^\top \pmb{p}=1, 
\end{array}
\end{equation}
where $\pmb{1}$ denotes a vector with ones, $p_{ij}=\mathbb{P}(Y=y_i|X=x_j)$, and $\pmb{p}=[p_1, p_2,\ldots, p_n]^\top$. The above formulation discretizes the intervals $x \in (-A, A)$ and $y \in (-\Gamma, \Gamma)$ (where $\Gamma\gg A$) with sufficiently small step size $\Delta x\rightarrow 0, \Delta y \rightarrow 0$ to form the input ($n=2A/\Delta x$ mass points) and the output ($m=2\Gamma/\Delta y$ mass points) symbol set, respectively. Formulation (P3) is a convex optimization problem where the objective function is concave in $\pmb{p}$; therefore can be solved by using standard convex optimization tools (e.g., CVX). It is worth noting that (P3) can be also solved by using the Blahut-Arimoto algorithm for constrained discrete channels, which numerically computes the capacity of DMC with constraints in the input distribution \cite{BLA}. If $\pmb{p}^*$ is the solution to (P3), the maximum mutual information becomes equal to $I_{\max}=\sum_{i=1}^{m}\sum_{j=1}^n p_{ij}p_{j}^{*}\log_2\frac{p_{ij}}{\sum_{k=1}^n  p_{ik}p_{k}^{*}}$. In case that ER is active, the average energy harvested is written as $\mathcal{E}_{\min}=\sum_{j=1}^{n}p_{j}^{*}I_0(Bh_E d(x_{j}))$.
\begin{remark}\label{rm1}
For the case where $d(A)\leq \overline{A}\approx 1.665$ (peak output amplitude  \cite{SHA}) and $A^2\leq  \sigma_x^2$, the input distribution is equiprobable binary i.e., $\Pi_A=\frac{1}{2}(\delta_{-A}+\delta_{+A})$; in this case, we have $I_{max}=1-H_2(P_e)$ and $\mathcal{E}_{\min}=I_0(Bh_E d(A))$, where $H_2(x)$ is the binary entropy with probability $x$, and $P_e=\int_{0}^{\infty} p(y|-A) dy$. In this case and according to Proposition 1, we can see that there is not a trade-off between information/energy and the same input distribution (i.e., equiprobable binary with mass points at $\pm A$) maximizes both information and energy transfer simultaneously.
\end{remark}    

Then, we consider the case where both ER and IR are active/present. The IE-CR is defined as 
\begin{align}
\mathcal{C}(\sigma_x^2,A)=\bigg\{(I,\mathcal{E}):\; &I\leq I_{\max},\; \mathcal{E}\leq \mathcal{E}_{\max},\; \nonumber  \\
& \mathbb{E}[x^2]\leq \sigma_x^2,\; |X|\leq A \bigg\}. \label{region}
\end{align}
To characterize the boundary of the IE-CR, we observe that when $I\leq I_{min}$, the maximum average harvested energy is given by the input distribution that achieves the rate tuple $(I_{\min},\mathcal{E}_{\max})$, given by the solution to (P1). On the other hand, when $\mathcal{E}\leq \mathcal{E}_{\min}$, the maximum information rate is given by the input distribution that achieves the rate tuple $(I_{\max},\mathcal{E}_{\min})$, given by the solution to (P3). The other points of the boundary $I_{\min} \leq I \leq I_{\max}$ and $\mathcal{E}_{\min} \leq \mathcal{E} \leq \mathcal{E}_{\max}$ can be found by solving a new optimization problem, which is similar to (P3) with the extra constraint $\mathcal{E}_{\min}\leq \mathbb{E}[I_0(Bh_E|\hat{X}|)]\leq \mathcal{E}_{\max}$. Since the extra constraint is linear over the input distribution $F_X$, the optimization problem is still convex and can  solved by using standard convex optimization tools e.g. CVX.   

\begin{figure}[t]
\centering
\includegraphics[width=0.96\linewidth]{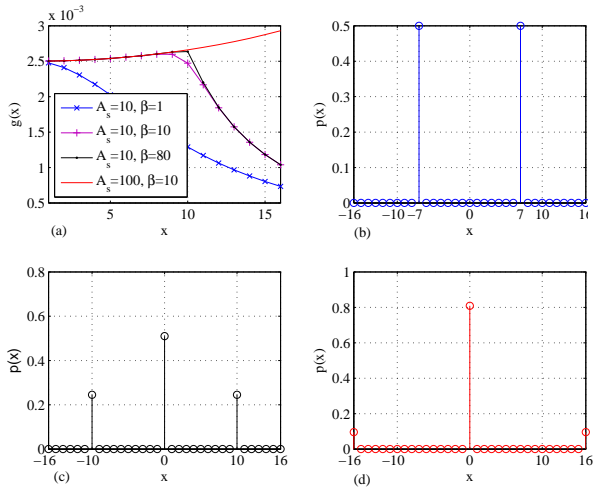}
\vspace{-0.3cm}
\caption{(a) The function $g(x)$ for different parameters of the SSPA model; we also assume $A=16$, $B=0.1$, and  $\sigma_x^2=49$, (b) Input distribution for $g(x)\searrow (\sigma_x, A)$ , (c) Input distribution for  $g(x)\nearrow (\sigma_x, A_s)$ and $g(x)\searrow (A_s, A)$, and (d) Input distribution for $g(x)\nearrow (\sigma_x, A)$.}\label{fig1}
\end{figure}

\subsection{Digital predistortion}
In this section, we study the IE-CR for the case where a PD is applied to the input signal before HPA. The purpose of PD is to compansate the non-linear HPA effects and linearize the non-saturation regime of HPA. In case that HPA function $d(r)$ is deterministic and known at the transmitter, an ideal PD corresponds to the function $q(r)$ i.e.,   
\begin{align}
q(r)=\left\{\begin{array}{l} A_s,\;\;\;\;\;\;\;\;\;\;\;\;\;\;\;\;\;\;\;\;\;\;\;\;\;\;\;\;\;\;\;\;\;\text{If}\;r\geq A_s,  \\
d^{-1}(r)=\frac{r}{\left[1-\left(\frac{r}{A_s}\right)^{2\beta} \right]^{\frac{1}{2\beta}}},\;\text{If}\; -A_s<r<A_s, \\
-A_s,\;\;\;\;\;\;\;\;\;\;\;\;\;\;\;\;\;\;\;\;\;\;\;\;\;\;\;\;\;\;\text{If}\;r\leq -A_s. \end{array} \right.
\end{align}  
By using similar analytical steps with the HPA case (i.e., solving optimization problems (P1), (P2) and (P3)), the information energy capacity region is expressed by \eqref{region} with two basic modifications i.e.,  i) the AP constraint is replaced by  $\mathbb{E}[q(x)^2]\leq \sigma_x^2$, and ii) HPA's output is equal to $\hat{X}=d(q(X))$. These two modifications do not affect the nature and the characteristics of the problem (discreteness of the input distribution, convexity over $\pmb{p}$ etc.) and therefore the proposed mathematical framework can be applied accordingly. It is worth noting that $r\geq d(r)$ and therefore PD penalizes the AP constraint (increases transmit power), while it facilitates the objective functions in (P1)-(P3).

\section{Numerical results}

Computer simulations have been carried out to evaluate the impact of HPA in terms of IE-CR; for the sake of simplicity, we assume $h_I=h_E=1$ without loss of generality.

Fig. \ref{fig1} deals with the input mass distribution for different system configurations when IR is not active and the target is to maximize the average harvested energy. We assume $A=16$, $\sigma_x^2=49$ and thus $\sigma_x^2\leq A^2$; Fig. \ref{fig1}(a) plots the function $g(x)$ for the configurations considered. For the case where $A_s=10$, $\beta=1$ (see \ref{fig1}(b)), we have $g(x)\searrow (\sigma_x, A)$ and the optimal input mass distribution consists of two mass points at $\pm \sigma_x$. On the other hand, for the scenario where $A_s=10<A$ and $\beta=80$ (see \ref{fig1}(c)), the nonlinearity in the HPA transition region disappears and thus  $g(x)\nearrow (\sigma_x, A_s),\;g(x)\searrow (A_s, A)$; in this case the input distribution consists of three mass points at $\{\pm A_s, 0\}$. Finally, for the scenario where $A<A_s=100$ and $\beta=10$, we have $g(x)\nearrow (\sigma_x, A)$ and the optimal input distribution is $\{\pm A, 0\}$. The main observations of Fig. \ref{fig1} are inline with Proposition \ref{pr1}.      

\begin{figure}[t]
\centering
\includegraphics[width=\linewidth]{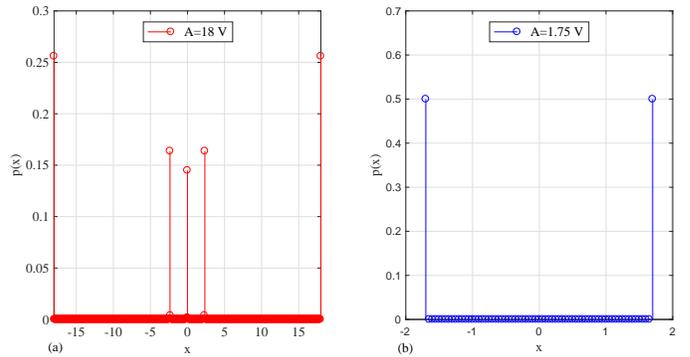}
\vspace{-0.3cm}
\caption{Input mass probability distribution for maximum information transfer;  $\beta=1$, $ B=0.5$, $\sigma_x^2=30$ dB, and (a) $A=18$, and (b) $A=1.75$.}\label{fig2}
\end{figure}

In Fig. \ref{fig2}, we show the input distribution for the case where ER is not present and the goal of the system is to maximize information Shannon capacity; the setting is $A_s=5$, $\beta=1$, $\sigma_x^2=30$ dB, and $B=0.5$. For the case where $A=18$ (Fig. \ref{fig2}.(a)), we can see that the optimal input distribution is discrete with a finite number of mass probability points. In Fig. \ref{fig2}.(b), we examine the special case of small $A$ i.e.,  $A=1.75$ (with $d(1.75)=1.6518< \overline{A}\approx 1.665$ \cite{SHA}) and as it can be seen the optimal input distribution is binary with two mass points at $\pm A$; this observation is inline with Remark \ref{rm1}. 

Fig. \ref{fig3} shows the fundamental information-energy capacity region for the considered SWIPT system with HPA nonlinearities at the transmitter. The simulation setup assumes $A_s=5$, $\beta=1$, $\sigma_x^2=30$ dB and $B=0.5$; the case without HPA degradation is used as a benchmark (no-HPA). The first remark is that HPA nonlinearities significantly reduce the achieved IE-CR in comparison to the no-HPA case; the negative effects of HPA are more critical as the PP constraint increases. Another important observation is that for low $A$ (i.e., $A=1.75$ with $d(A)<\overline{A}$), there is not a tradeoff between information and energy and thus the same input distribution maximizes both information and energy transfer (Remark \ref{rm1}). Finally, in the curve corresponding to $A=10$, we can see the key points of the boundary of the information-energy capacity region, which are defined in \eqref{region}.     

Finally, Fig. \ref{fig4} deals with the impact of PD on the IE-CR; we study configurations with a different parameter $\beta$. As we can see, the application of a PD on the input signal, limits the negative effects of HPA and enlarges the IE-CR. However, the observed gain decreases as the smoothness parameter $\beta$ increases, since the associated power for the inversion $d^{-1}(r)$ increases; for $\beta=10$, the transition region is almost linear and the application of PD slighly decreases the IE-CR performance.   

\begin{figure}[t]
\centering
\includegraphics[width=\linewidth]{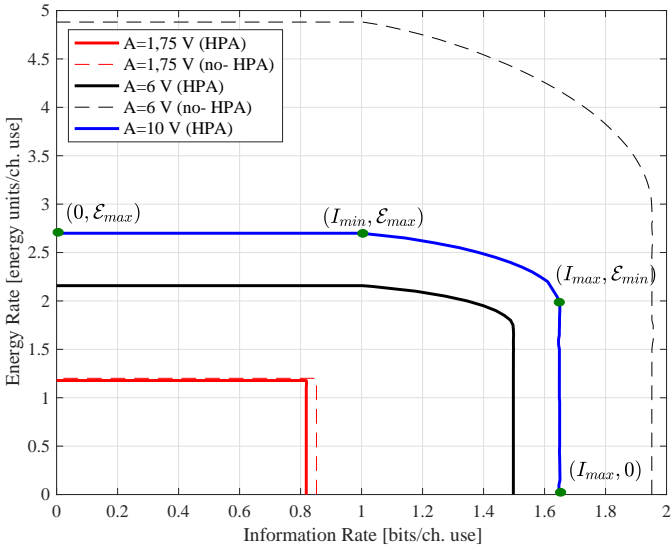}
\vspace{-0.3cm}
\caption{Information-energy capacity region; $A_{s}=5$, $\beta=1$, $B=0.5$, $\sigma_x^2=30$ dB.} \label{fig3}
\end{figure}

\begin{figure}[t]
\centering
\includegraphics[width=\linewidth]{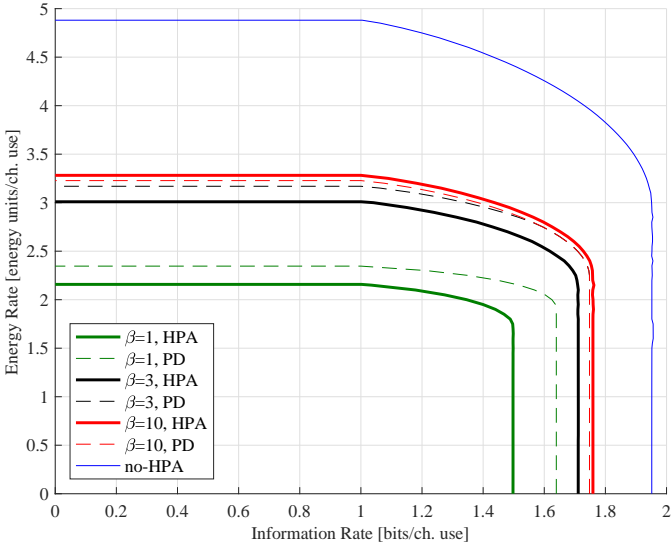}
\vspace{-0.3cm}
\caption{Information-energy capacity region for the predistortion scheme; $A=6$, $A_{s}=5$, $B=0.5$, $\sigma_x^2=30$ dB.} \label{fig4}
\end{figure}

\appendix

We consider the functions $g_1(x)=I_0(\theta x)$ and $g_2(x)=g_1(d(x))$ where $\theta$ is a constant. The function $d(x)$ is monotonically increasing function i.e., the first derivative equals to $d'(x)=1/[1+(x/A_s)^{2\beta}]^{\frac{1}{2\beta}+1}>0$, $\forall\ x$. Given that $g_1(x)$ is monotonically increasing function for $x>0$, the composite function $g_2(x)$ is an increasing function for $x>0$ (composition of two increasing functions). 

For the case where $A^2 \leq \sigma_x^2$, the PP constraint dominates and due to the monotonicity and even symmetry of $g_2(x)$, the optimal distribution consists of two mass points at $-A$ and $A$ with probabilities $p_1$ and $p_2=1-p_1$, respectively. Therefore, the average harvested energy becomes equal to $\mathcal{E}_{\max}=g_2(A)$. Although $p_1$ can take any value in $(0, 1)$ without affecting the maximum average harvested energy, we assume $p_1=1/2$ to maximize the information transfer (in case that IR is active). 

When $\sigma_x^2<A^2$, we examine also the case where the mass points are located at the region $(\sigma_x, A)$. Similarly to the previous case (i.e., $A^2\leq \sigma_x^2$), let $x_0=\sigma_x$ the point of increase of a distribution $F_X$ with probability $1$; we construct a new distribution $F'_X$ by removing $x_0$ and adding two mass points at the locations $0$ and $y \in (\sigma_x, A)$ with probabilities $1-\sigma_x^2/y^2$ and $\sigma_x^2/y^2$, respectively. We can show that this transformation decreases/increases the harvested energy depending on the monotonicity of the function $g(x)=(g_2(x)-1)/x^2$. More specifically, if $g(x)$ is a decreasing function in $(\sigma_x, A)$  i.e.,   
\begin{align}
& \frac{g_2(\sigma_x)-1}{\sigma_x^2}>\frac{g_2(y)-1}{y^2} \nonumber \\
& \Rightarrow g_2(\sigma_x)>\left(1-\frac{\sigma_x^2}{y^2} \right)g_2(0)+\frac{\sigma_x^2}{y^2}g_2(y), \label{con1}
\end{align}     
with $g_2(0)=1$, $F_X'$ decreases the average harvested energy and thus $F_X$ is the optimal input distribution; by following similar arguments as before, the optimal distribution consists of two points at $-\sigma_x$ and $\sigma_x$ with probabilities $1/2$, and the maximum harvested energy becomes equal to $\mathcal{E}_{\max}=g_2(\sigma_x)$. On the other hand, if $g(x)$ is an increasing function in $(\sigma_x, A)$, the inequality in \eqref{con1} holds with the reverse direction and $y=A$ maximizes the average harvested energy. In this case, the optimal mass function consists of three points at the locations $-A$, $A$ and $0$ with probabilities $p_{1}=p_{2}=\frac{\sigma_x^2}{2A^2}$ and $p_{0}=1-\frac{\sigma_x^2}{A^2}$, respectively. Finally, in case that $A_0 \in (\sigma_x, A)$, the function $g(x)$ is increasing in the interval $(\sigma_x, A')$ and decreasing in the interval $(A', A)$ and therefore we have $y=A'$; we note  $A'\approx A_s$ for $\beta>>1$. Equivalently, the optimal input distribution consists of three mass points at the locations $-A'$, $A'$ and $0$ with probabilities $p_{1}=p_{2}=\frac{\sigma_x^2}{2A'^2}$ and $p_{0}=1-\frac{\sigma_x^2}{A'^2}$.  For these two subcases (with three mass points), the maximum average energy is equal to $\mathcal{E}_{\max}\approx 2p_1 g_2(\mu)+p_0$ with $\mu=A$ and $\mu=A'$, respectively.

\section*{Acknowledgement}

This work has received funding from the European Research Council (ERC) under the European Union's Horizon 2020 research and innovation programme (Grant agreement No. 819819).

\end{document}